\documentclass[journal]{IEEEtran}
\usepackage{amsmath, amssymb}
\usepackage[ruled,linesnumbered]{algorithm2e}
\usepackage{graphicx}
\usepackage{subfig}
\usepackage{mathtools}
\usepackage{enumitem}
\usepackage{color}
\SetKwInput{KwInput}{Input}
\SetKwInput{KwOutput}{Output}

\SetAlCapFnt{\bfseries}
\SetAlCapNameFnt{\normalfont}
\SetAlgoNlRelativeSize{0}

\SetAlgoBlockMarkers{begin}{end}
\SetKwFor{For}{for}{do}{end}
\SetKwIF{If}{ElseIf}{Else}{if}{then}{else if}{else}{end}
\SetKwFor{While}{while}{do}{end}

\newtheorem{lem}{Lemma}
\newtheorem{thm}{Theorem}

\newtheorem{remark}{Remark}

\begin{document}
\title{ Internal State-Based Policy Gradient Methods for Partially Observable Markov Potential Games
}

\author{Wonseok~Yang and~Thinh T. ~Doan
\thanks{This work was supported in part by the National Science Foundation (NSF) under  CAREER Award 2527059 and AFOSR Grant FA9550-25-1-0247.}
\thanks{Wonseok Yang and Thinh T. Doan are with the University of Texas at Austin, Austin,
TX, 78712 USA (e-mail:{wonseok.yang@utexas.edu, \,thinhdoan@utexas.edu})}
}

\maketitle

\begin{abstract}
This letter studies multi-agent reinforcement learning in partially observable Markov potential games. Solving this problem is challenging due to partial observability, decentralized information, and the curse of dimensionality. First, to address the first two challenges, we leverage the common information framework, which allows agents to act based on both shared and local information. Second, to ensure tractability, we study an internal state that compresses accumulated information, preventing it from growing unboundedly over time. We then implement an internal state-based natural policy gradient method to find Nash equilibria of the Markov potential game. Our main contribution is to establish a non-asymptotic convergence bound for this method. Our theoretical bound decomposes into two interpretable components: a statistical error term that also arises in standard Markov potential games, and an approximation error capturing the use of finite-state controllers. Finally, simulations across multiple partially observable environments demonstrate that the proposed method using finite-state controllers achieves consistent improvements in performance compared to the setting where only the current observation is used.
\end{abstract}
\begin{IEEEkeywords}
Multi-agent reinforcement learning, partially observable Markov potential games
\end{IEEEkeywords}

\section{Introduction}
Multi-agent systems have emerged as a powerful paradigm for tackling complex real-world tasks, with applications spanning search and rescue \cite{MLyu}, cooperative transportation \cite{BPandit}, and autonomous navigation \cite{JChoi}. Problems in these applications are naturally modeled as decentralized partially observable Markov decision processes (Dec-POMDPs), where agents compute policies from local observations. Exact solutions of Dec-POMDPs are computationally intractable due to partial observability, decentralized execution, and the exponential growth of the joint state–action space \cite{MKo}. The common information framework studied in \cite{ANayyar} mitigates these difficulties by having agents share a subset of local observations, enabling a reduction of Dec-POMDP to a centralized MDP over belief states, to which dynamic programming can be applied. This approach, however, requires the knowledge of the environment's transition dynamics, which are difficult or impossible to obtain in many applications. This motivates the use of data-driven approaches, such as reinforcement learning (RL), where policies are learned directly from data.

We study an RL framework for a structured subclass of Dec-POMDPs known as partially observable Markov potential games (POMPGs), which encompasses identical-reward team problems as a special case. In the fully observable setting, Markov potential games (MPGs) are well studied: prior work has established convergence of gradient-based dynamics to Nash equilibria (NE) \cite{SLeo,RZ} and characterized their iteration complexity \cite{RZhang,YSun}. Under partial observability, however, gradient-based methods remain largely unexplored. While the common information approach can reduce a POMPG to a belief-based MPG \cite{KHorak}, the resulting belief space is continuous and uncountable even for finite state spaces \cite{SSeuken}, rendering direct optimization intractable. Addressing this challenge requires efficient belief representations to design tractable gradient algorithms.

To this end, we develop a tractable policy gradient method for POMPGs that can scale to large multi-agent systems. Our approach builds on recent single-agent work \cite{SCayci,WMao}, where internal state representations compress observation histories to enable policy learning from finite memories. Extending this idea to multi-agent systems is non-trivial due to the decentralized structure; we therefore integrate the common information framework with finite internal state representations to yield a principled and tractable RL approach for POMPGs.

\noindent \textbf{Main Contributions.} We study internal state-based POMPGs, a structured subclass of POMGs in which each agent maintains finite internal states to compress its shared information history and local memory. By restricting policies to finite-state controllers, we prevent unbounded growth of common information and local memory, achieving computational tractability at the cost of exact optimality. Our focus is to develop a variant of the natural policy gradient (NPG) algorithm to optimize over the class of internal state-based policies, and provide theoretical guarantees on its performance in converging to a Nash equilibrium of the POMPG. Specifically, we establish a performance bound comprising an iteration complexity of $\mathcal{O}(1/\sqrt{T})$, where $T$ is the number of iteration, and an approximation error term arising from the use of finite-state controllers. Finally, we validate our theoretical findings through experiments across multiple RL environments, demonstrating the effectiveness of the proposed approach.
\section{Problem Formulation}
\subsection{POMPGs with information sharing}
We consider a partially observable Markov games (POMG) modeled as a tuple $\mathcal{G}=\{\,\mathcal{X},\{\mathcal{Y}_i\},\{\mathcal{U}_i\},P,\{\Omega_i\},\{r_i\}\}$, where $i\in \mathcal{N}=\{1,\cdots,n\}$ and $n$ is the number of agents. In this model, $\mathcal{X}$ is a finite set of states, and $\mathcal{Y}_i,\mathcal{U}_{i}$ denote a finite set of observations and actions at agent $i$, respectively. Let $\mathcal{U}=\mathcal{U}_1\times\cdots\mathcal{U}_n$ be the set of joint actions. Then, $P:\mathcal{X}\times\mathcal{U}\rightarrow\Delta\mathcal{X}$ is a state transition probability matrix, where $\Delta\mathcal{X}$ is a distribution over $\mathcal{X}$. 
Finally, $\Omega_i:\mathcal{X}\rightarrow \Delta \mathcal{Y}_i$ is a local probabilistic observation model and $r_i:\mathcal{X}\times\mathcal{U}\rightarrow\mathbb{R}$ denotes the local reward function at agent $i$.

Let $\mathcal{H}_i$ be the local information set available to agent $i$ and $\mathcal{H}=\cup_{i} \mathcal{H}_i$. A joint policy $\pi=(\{\mu_i^1\},\{\mu_i^2\},\cdots)$ is a sequence of mappings, where $\mu_i^k:\mathcal{H}_i^k\rightarrow\Delta_{\mathcal{U}_i}$ specifies the actions of agent $i$ at time $k$. Given a joint policy $\pi$ and an initial information $h^{0}$, let $V_{i}^{\pi}$ be the discounted value function of agent $i$ corresponding to $\beta \in (0,1)$
\begin{equation}
V_i^{\pi}(h^0)=\mathbb{E}_{\tau \sim \mathcal{P}_{h^0,T,\Omega}^{\pi}}\big[\sum_{k=0}^{\infty}\beta^k r_i(x^k,u^k) \,|\, H_i^0=h^0\big], \\
\label{eq:obj}
\end{equation}
and $J_i(\pi)=\mathbb{E}_{h^0 \sim \zeta}\big[V_i^{\pi}(h^0)\big]$ under some initial distribution $\zeta$. In the sequel, we write $\pi = (\pi_{i},\pi_{-i})$, where $\pi_{-i}$ denotes the collection of agents' policies but agent i's. Given $\pi_{-i}$, each agent $i$ seeks a policy $\pi^{\star}_{i}$ that maximizes its function $J_i(\pi_{i},\pi_{-i})$. In this setting, a common objective is to search for a Nash equilibrium (NE) $\pi^*=(\pi_1^*,\cdots,\pi_n^*)$ satisfying 
\begin{equation*}
J_i(\pi_i^*,\pi_{-i}^*) \geq J_i(\pi_i,\pi_{-i}^*),\quad \forall \pi_{i},\; i\in\mathcal{N}. 
\end{equation*}
At an NE, no agent has a unilateral incentive to deviate from its equilibrium, provided that all other agents play at their NEs.

In this paper, we will focus on a subclass of POMG, namely, partially observable Markov potential games (POMPG). In this setting, there exists a potential function $\phi:\mathcal{X}\times\mathcal{U}_i\times\mathcal{U}_{-i}\rightarrow\mathbb{R}$ such that the following equality holds for any agent $i$, and any pair of policies $(\pi'_i,\pi_{-i})$ and $(\pi_i,\pi_{-i})$:
\begin{equation}
J_i(\pi'_i,\pi_i)-J_i(\pi_i,\pi_i)=\Phi(\pi'_i,\pi_i)-\Phi(\pi_i,\pi_i),
\label{eq:POMPG}
\end{equation}
where the global potential function $\Phi$ is defined as
\begin{equation}
\Phi(\pi)\coloneqq \mathbb{E}\big[\sum_{k=0}^\infty \beta^k \phi\big(x^k,u^k\big)\,\big|\,H^0=h^0\big].
\label{eq:obj}
\end{equation}
with $\phi_{\min}\leq\phi(x,u)\leq\phi_{\max}$ for all $(x,u) \in \mathcal{X}\times\mathcal{U}$. \\
\indent
Finding an NE in the decentralized partially observable MDP (Dec-POMDP) setting is generally challenging and intractable, owing to each agent's partial and decentralized access to the global environment state. To address this, we adopt the common information sharing framework of \cite{ANayyar}, which transforms the Dec-POMDP into a centralized MDP. In this framework, each agent shares a portion of its local observations to a common memory accessible by all agents, and each agent then determines its NE policy using both the shared common information and its own local observation. We present this idea as follows.

Let $m_i^{k}\in\mathcal{M}_i$ be the local memory at time $k$ at agent $i$, i.e., $m_i^k\subset\{y_i^0,\cdots y_i^{k-1},u_i^0,\cdots,u_i^{k-1}\}$. The shared memory $c_{k}\in\mathcal{C}^{k}$ aggregates observations and actions broadcast by all agents,e.g., $c^k\subset \{y^0,\cdots,y^{k-1},u^0,\cdots,u^{k-1}\}$. At time $k$, each agent obtains local observation $y_i^k$ and selects action $u_i^k$ based on its available information $h_i^k=(y_i^k,m_i^k,c^k)$. It then transmits a portion $z_i^k\subset\{m_i^k,y_i^k,u_i^k\}$ to update the shared memory $c^{k}$.

As established in \cite{ANayyar}, the common information approach reduces the Dec-POMDP to a standard POMDP, which can in turn be solved by reformulating it as an equivalent MDP over the belief state space — the space of probability distributions over $\mathcal{X}$. However, this approach faces two fundamental challenges: (1) the belief state space is uncountable even when 
$\mathcal{X}$ is finite; and (2) computing belief states requires to store the entire history of agents' observations and the knowledge of the transition kernel $\mathcal{P}$, i.e., a full model of the environment. We address both challenges by introducing finite internal states and implementable approximation to the belief states. Building on this, we will develop a reinforcement learning algorithm in which each agent learns a finite-state NE policy directly from its local observations, without requiring knowledge of $\mathcal{P}$.

\subsection{Finite Internal State-Based Policies}
We introduce the finite internal state representation for both shared information and local observations at the agents. Let  $\mathcal{W}$ be a finite set and $w^k\in\mathcal{W}$ represent a compression of the shared information history up to time $k$. The shared internal state is updated as $w^{k+1}=\varphi(w^k,z^k,u^k)$ for some chosen function $\varphi:\mathcal{W}\times\mathcal{Z}\times\mathcal{U}\rightarrow\mathcal{W}$. Similarly, we denote by $\mathcal{L}_i$ a finite set and $l_i^k\in\mathcal{L}_i$ a compression of the local memory at time $k$. The local compressed memory $l_i$ is updated as   $\l_i^{k+1}=\chi_i(l_i^k,y_i^k,u_i^k,z_i^k),$ where $\xi_{i}:\mathcal{L}_{i}\times Y_{i}\times U_i\times \mathcal{Z}_{i}\rightarrow\mathcal{L}_{i}$ is a prior chosen function.  Examples of these functions include a finite window of past observation–action histories and feature representations produced by recurrent neural network (RNN)–based models \cite{SCayciRNN}. 

Given the internal state-based system, we will consider the class of finite-state controllers (FSC), denoted as $\Pi_{i}$, where each agent takes action $u_i^k$ based on the shared internal state $w^k$, the local internal state $l_i^k$, and the local observation $y_i^k$. The objective of the agents is to search a NE policy over $\Pi= \Pi_{1}\times\ldots\Pi_{n}$, i.e., they seek a $\pi^{\star}\in\Pi$ satisfying
    \begin{equation}
J_i(\pi_i^*,\pi_{-i}^*) \geq J_i(\pi_i,\pi_{-i}^*),\quad \forall \pi_{i}\in\Pi_{i},\; i\in\mathcal{N}.\label{eq:NE-finite-controller} 
\end{equation}
\section{Internal State Natural Policy Gradient}
We propose an internal state-based natural policy gradient (NPG) method to find the finite-state NE policy of the POMPG defined in \eqref{eq:NE-finite-controller}. The NPG is a gradient-based algorithm that updates policy parameters while accounting for the geometry of the policy distribution space rather than the Euclidean geometry of the parameter space \cite{AAgarwal}. 
Let each policy $\pi_i$ parameterized by $\theta_i\in\mathbb{R}^{|\hat{h}_i||u_i|}$ and our focus is to study the tabular softmax policies defined as
$$\pi_{\theta_i}(u_i|\hat{h}_i)=\frac{\exp(\theta_{\hat{h}_i,u_i})}{\sum_{u'_i}\exp(\theta_{\hat{h}_i,u'_i})}\cdot$$ 
Let $(w^0,l^0,y^0)\sim\xi\in\Delta_{\mathcal{W}\times\mathcal{L}\times\mathcal{Y}}$ denote initial approximate information. The initial information available to agent $i$ is $(w^0,l_i^0,y_i^0)$, whose distribution is marginal $\xi_i\in\Delta_{\mathcal{W}\times\mathcal{L}_i\times\mathcal{Y}_i}$. Further, we define the discounted visitation distribution conditioned on $(w^0,l_i^0,y_i^0)$ under policy $\pi$ as 
\begin{equation*}
\begin{aligned}
&d_{(w^0,l_i^0,y_i^0)}^{\pi}(w,l,y) \\
&=(1-\beta)\sum_{k=0}^{\infty}\beta^k\text{Pr}^{\pi}(W^k=w,L_i^k=l_i,Y_i^k=y_i|w^0,l_i^0,y_i^0)
\end{aligned}
\end{equation*}
for any policy $\pi$ and $(w,l,y)$. Also, let the marginal discounted visitation distribution over information available to agent $i$  be $d_{(w^0,l_i^0,y_i^0)}^{\pi}(w,l_i,y_i)=\sum_{l_{-i}^0,y_{-i}^0}d_{(w^0,l_i^0,y_i^0)}^{\pi}(w,l,y)$. For $\xi_i$, let $d_{\xi_i}^{\pi}(w,l_i,y_i)=\mathbb{E}_{(w^0,l_i^0,y_i^0)\sim\xi_i}\big[d_{(w^0,l_i^0,y_i^0)}^{\pi}(w,l_i,y_i)\big]$. 
For notational convenience, let $\hat{h}_i=(w,l_i,y_i)$ and $\hat{h}=(w,l,y)$.

The NPG algorithm updates the policy parameter as 
\begin{equation}
\theta_i^{t+1} =\theta_i^{t}+\eta\,F_i(\theta_i^t)^{\dagger}\nabla_{\theta_i}J_i(\pi_{\theta}^t),
\end{equation}
where $F_i(\theta_i^t)$ is the Fisher information matrix
\begin{equation*}
\hspace{-0.1cm}F_i(\theta_i^t)=\mathbb{E}_{\hat{h}_i\sim d_{\xi_i}^{\pi},u_i\sim \pi_{\theta_i}^t}\big[ \nabla_{\theta_i}\log\pi_{\theta_i}^t(u_i|\hat{h}_i)\nabla_{\theta_i}^{\top}\log\pi_{\theta_i}^t(u_i|\hat{h}_i) \big]
\end{equation*}
and $F_{i}(\theta)^\dagger$ denote its Moore-Penrose inverse, respectively. Under softmax parameterization, the NPG algorithm updates the parameter $\theta_{\hat{h}_i,u_i}$ and the policy $\pi_i$ as
\begin{align}
\begin{aligned}
\theta_{\hat{h}_i,u_i}^{t+1}&=\theta_{\hat{h}_i,u_i}^t+\cfrac{\eta}{1-\beta}\,A_i^{\pi^t}(\hat{h}_i,u_i), \\
\pi_i^{t+1}(u_i|\hat{h}_i)&= \pi_i^t(u_i|\hat{h}_i)\exp\left( \cfrac{\eta A_i^{\pi^t}(\hat{h}_i,u_i)}{1-\beta} \right)\Big/g_i^t(\hat{h}_i),
\end{aligned}
\label{eq:NPGpi}
\end{align}
where $g_i^t(\hat{h}_i)= \sum_{u_i}\pi_i^{t}(u_i|\hat{h}_i)\exp\big(\eta A_i^{\pi^t}(\hat{h}_i,u_i)/(1-\beta)\big)
$
is a normalized constant \cite{RZhang}. For each iteration, all agents update their own policy $\pi_i$ at the same time, and this algorithm requires the access to $A_i^{\pi^t}$ for synchronous update. The advantage function can be obtained through Monte-Carlo or TD-learning, but we assume that we know the exact advantage function. Details of the NPG algorithm is described in Algorithm 1.
\begin{algorithm}[t]
\caption{Internal State NPG Algorithm}
\label{alg:npg}
\DontPrintSemicolon

\KwInput{Initial policy $\pi^0$ and learning rate $\eta$}
    \For{$t = 0$ \KwTo $T-1$}{
        All agents simultaneously:\;
        \Indp
        Update policy using \eqref{eq:NPGpi}\;
        \Indm
    }
\KwOutput{Optimized policy $\pi^T$}
\end{algorithm}

\section{Main Results}
We present our main theoretical result in Theorem \ref{main_theorem}, where we provide an upper bound to characterize the convergence properties of the proposed internal state NPG method. Let $b(\cdot|c)$ and $b(\cdot|w)$ be the belief states defined over $(x,y,m)$ conditioned on the common information and shared internal state, respectively. In addition, let $d_{b}$ be the total variation distance between these two distributions
\begin{align}
    d_{b}&=\max_{w}\|b(\cdot\,|\,c)-b(\cdot\,|\,w)\|_{\text{TV}}.
\end{align}
\indent
Our theoretical bound will depend on this TV distance as the agents can only have access to the shared internal state. For our result, we will assume that $\inf_{\pi}\min_{i}\min_{\hat{h}_i}d_{\xi_i}^{\pi}(\hat{h}_i)>0$, which basically guarantees sufficient exploration of the initial policies $\rho_{i}$ and has been used extensively in the literature \cite{AAgarwal}. Finally, we define the NE-gap of agent $i$ as $\text{NE-gap}_i(\pi)=\sup_{\pi'_i}J_i(\pi'_i,\pi_{-i})-J_i(\pi_i,\pi_{-i}),$ and NE-gap as $\text{NE-gap}(\pi)=\max_i\text{NE-gap}_i(\pi)$.

\begin{thm}\label{main_theorem}
  Let $\eta = (1-\beta)^2/\big(2n\phi_{\max}\big)$. Then the policies $\{\pi^{t}\}$ generated by Algorithm \ref{alg:npg} satisfy
\begin{equation}
\cfrac{1}{T}\sum_{t=0}^{T-1}\,\text{NE-gap}(\pi^t)
\leq \mathcal{O}\big(\sqrt{\frac{n}{aT}}\big) + \varepsilon_{\text{FSC}},\label{main_theorem:ineq}
\end{equation}
where
\[
\begin{aligned}
\varepsilon_{\text{FSC}}&=\cfrac{2\sqrt{2}\phi_{\max}}{1-\beta}\sqrt{\big(d_b^{2}+\cfrac{3Mn}{a(1-\beta)}\,d_b\big)}, \\
a&=\min_{\hat{h}_i}\sum_{u_i^*}\pi_i(u_i^*\,|\hat{h}_i), \,\, u_i^*= \arg\max_{u_i} Q_i^{\pi^t}(\hat{h}_i,u_i).
\end{aligned}
\]
\end{thm}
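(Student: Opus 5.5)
The plan is to follow the analysis of NPG for fully observable Markov potential games in \cite{RZhang}, adapted to the internal-state MDP. Each quantity indexed by the true information $h_i=(y_i,m_i,c)$ will be replaced by its internal-state counterpart $\hat h_i=(w,l_i,y_i)$. Every place where the Markov property over $\hat h$ fails will be charged to the belief mismatch $d_b$.

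\textbf{Step 1: potential ascent per iteration.} Using \eqref{eq:POMPG} and a multi-agent performance difference lemma for the potential $\Phi$, I will lower bound $\Phi(\pi^{t+1})-\Phi(\pi^t)$. The increment decomposes into a sum over agents of $\frac{1}{1-\beta}\mathbb{E}_{\hat h_i\sim d^{\pi^{t+1}}}\sum_{u_i}(\pi_i^{t+1}-\pi_i^t)(u_i|\hat h_i)A_i^{\pi^t}(\hat h_i,u_i)$, plus cross terms from simultaneous updates.

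The cross terms will be controlled by $\|\pi_i^{t+1}-\pi_i^t\|$. With $\eta=(1-\beta)^2/(2n\phi_{\max})$ they contribute at most half of the diagonal gain, which is where the factor $n$ enters. The multiplicative-weights form \eqref{eq:NPGpi} gives $\sum_{u_i}\pi_i^{t+1}A_i^{\pi^t}\ge \frac{1-\beta}{\eta}\log g_i^t(\hat h_i)\ge 0$, and $\log g_i^t$ is lower bounded by the squared advantage gap.

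The delicate point is that $A_i^{\pi^t}(\hat h_i,\cdot)$ is computed as if $\hat h_i$ were Markov. The true performance difference is expressed in terms of $Q$-values conditioned on $c$, i.e.\ on $b(\cdot|c)$. Replacing $b(\cdot|c)$ by $b(\cdot|w)$ will change each $Q$ by at most $\frac{2\phi_{\max}}{1-\beta}d_b$ in total variation. I expect an additive error of order $\frac{Mn}{(1-\beta)}d_b$ per step, where $M$ bounds the advantage magnitude.

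\textbf{Step 2: NE-gap via the gradient-domination-type bound.} For each $i$, I will relate $\text{NE-gap}_i(\pi^t)$ to $\max_{u_i}A_i^{\pi^t}(\hat h_i,u_i)$ through the performance difference lemma and the exploration assumption $\inf_\pi d_{\xi_i}^{\pi}>0$. Then $\max_{u_i}A_i\le$ the advantage gap between $u_i^*$ and the current mean.

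Using the fact that $\pi_i^t$ puts mass at least $a$ on $\arg\max Q_i$, one has $\log g_i^t\ge a\cdot(\text{gap})^2\cdot c\,\eta/(1-\beta)^2$. Hence $\text{NE-gap}_i(\pi^t)^2\lesssim \frac{\text{const}}{a}\big(\Phi(\pi^{t+1})-\Phi(\pi^t)\big)+\text{(FSC error)}$, where the FSC error collects both the $d_b^2$ term (from the value mismatch squared) and the $\frac{3Mn}{a(1-\beta)}d_b$ term (from Step 1 divided by $a$).

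\textbf{Step 3: telescoping.} Summing over $t$ and using $\Phi\le\phi_{\max}/(1-\beta)$ bounds $\frac1T\sum_t\text{NE-gap}(\pi^t)^2$ by $O(n/(aT))+\frac{8\phi_{\max}^2}{(1-\beta)^2}(d_b^2+\frac{3Mn}{a(1-\beta)}d_b)$. Cauchy--Schwarz and $\sqrt{x+y}\le\sqrt x+\sqrt y$ then give \eqref{main_theorem:ineq}.

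The main obstacle is Step 1's error accounting. The internal-state process is not an MDP, so the performance difference lemma must be applied on the true information state and then projected onto $\hat h_i$. The bound has to come out linear in $d_b$ for the ascent term and quadratic for the gap term. I will first try coupling the two belief-conditioned trajectories and bounding the $Q$-function discrepancy uniformly by $\frac{\phi_{\max}}{1-\beta}d_b$.
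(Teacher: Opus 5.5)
Your skeleton is the paper's: a performance-difference bound for $\Phi$ with a belief-mismatch penalty (Lemma 1), the NPG ascent identity with the simultaneous-update cross terms absorbed (Lemma 3), the bound $\text{NE-gap}(\pi)\le\frac{1}{1-\beta}\max_{i}\max_{\hat h_i,u_i}A_i^{\pi}(\hat h_i,u_i)+\frac{2d_b\phi_{\max}}{1-\beta}$ (Lemma 2), and $\log g_i^t\ge\frac a3\big(\frac{\eta\max_{u_i}A_i^{\pi^t}}{1-\beta}\big)^2$ (Lemma 4), followed by telescoping and Jensen.

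The genuine problem is the accounting that is supposed to produce $\varepsilon_{\text{FSC}}$, and it comes from misidentifying $M$. In the paper, $M=\sup_\pi\max_{\hat h_i}1/d_{\xi_i}^{\pi}(\hat h_i)$ quantifies the exploration assumption; it does not bound advantages. It is what lets you pass from the visitation-weighted quantity the ascent controls, $\sum_{\hat h_i}d_{\xi_i}^{\pi^{t+1}}(\hat h_i)\sum_i\log g_i^t(\hat h_i)$, to the worst-case $\max_{\hat h_i}$ advantage that controls the NE-gap, at a cost of $1/M$.

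The per-step belief error in the ascent inequality is also not of order $\frac{Mn}{1-\beta}d_b$. It is just $\lambda_2:=\frac{2d_b\phi_{\max}}{1-\beta}$ from Lemma 1, with no $M$ and no $n$. With $\eta=(1-\beta)^2/(2n\phi_{\max})$ one has $\kappa=\frac1\eta-\frac{2n\phi_{\max}}{(1-\beta)^2}=0$. So the cross terms cancel exactly the KL part of the diagonal gain, not ``half of the gain'', which leaves
\[
\Phi(\pi^{t+1})-\Phi(\pi^t)+\lambda_2\ \ge\ \frac{a\eta}{3M}\big(\text{NE-gap}(\pi^t)-\lambda_2\big)^2 .
\]
Dividing by $\frac{a\eta}{3M}$ is what generates $\frac{3M}{a\eta}\lambda_2=\frac{12Mn\phi_{\max}^2}{a(1-\beta)^3}\,d_b$. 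Here $n$ enters only through $1/\eta$, $M$ only through the visitation lower bound, and $a$ only through $\log g_i^t$. In your accounting the per-step error already carries $Mn$, and you then ``divide by $a$''. Done consistently, keeping the $1/\eta$ and the exploration constant, this gives an $n^2$ dependence and an extra constant, so it does not reach the stated $\varepsilon_{\text{FSC}}$.

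Two smaller corrections. First, Lemma 2 only gives $\text{NE-gap}(\pi^t)\le\frac{1}{1-\beta}\max_{\hat h_i,u_i}A_i^{\pi^t}+\lambda_2$. So the quantity you can square is (the positive part of) $\text{NE-gap}(\pi^t)-\lambda_2$, not $\text{NE-gap}(\pi^t)$ itself. The $d_b^2$ term and the factor $8$ then come from $\lambda_1^2\le2\lambda_2^2+2(\lambda_1-\lambda_2)^2$ with $\lambda_1=\text{NE-gap}(\pi^t)$. This is applied after telescoping, where $\Phi(\pi^T)-\Phi(\pi^0)\le\phi_{\max}/(1-\beta)$ supplies the $1/T$ term, and before Jensen. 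Second, $\log g_i^t$ scales like $\eta^2$, not $\eta$. One power is cancelled by the $1/\eta$ in the ascent identity, which is why the final coefficient is $a\eta/(3M)$.
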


\indent
\textit{Remark 1}: The proposed algorithm exhibits performance bound on the averaged NE gap consisting of $\mathcal{O}(1/\sqrt{T})$ term and the additional error term. If the state becomes fully observable, then the proposed performance bound can be reduced to $\mathcal{O}(1/\sqrt{T})$, which aligns with the result of \cite{RZhang}. Furthermore, the performance bound with $n=1$ includes $\mathcal{O}(1/\sqrt{T})$, analogous to the bound in \cite{SCayci}. 

The theorem further indicates that the presence of partial observability and use of FSC induces $\varepsilon_{\text{FSC}}$ which quantifies the difference in estimation of state between common information and internal state that approximates the common information. This implies that expressive internal-state representation reduces this gap, leading to improve the performance.  

The performance bound depends on the $1/\sqrt{a}$ that can be possible to become large, as similar to MPGs setting. A small value of $a$ represents the case where the policy far from the NE has difficulty in updating due to the small gradient. This happens when the initial policy is uniform initialization with the large size of the internal state. 

\subsection{Proof of Theorem \ref{main_theorem}}
We next present the analysis to derive the result in Theorem \ref{main_theorem}. We first present the following intermediate results that are necessary for our analysis. For convenience, we present their proofs in the Appendix. 

\begin{lem}
    Given two policies $\pi',\pi$ we have 
\begin{equation}
\begin{aligned}
\Phi(\pi')-\Phi(\pi) &\leq \cfrac{1}{1-\beta}\,\mathbb{E}_{\hat{h}_i,u}\big[A_{\phi}^\pi(
\hat{h}_i,u)\big] \\
&+ \cfrac{2\phi_{\max}}{1-\beta}\,\mathbb{E}_{\hat{h}_i,u}\big[\,\|b(\cdot\,|\,c)-b(\cdot\,|\,w)\|_{\text{TV}}\,\big],
\label{eq:lemma1}
\end{aligned}
\end{equation}
where $(\hat{h}_i,u)\sim (d_{\xi_i}^{\pi'}\otimes\pi')(\cdot)$ are sampled from the marginal state-action distribution induced by policy $\pi'$, and $(d_{\xi_i}^{\pi'}\otimes\pi')(\hat{h}_i,u)=\sum_{l_{-i},y_{-i}}d_{\xi_i}^{\pi'}(\hat{h})\pi'(u|\hat{h})$.
\end{lem}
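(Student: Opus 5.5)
The approach is a performance-difference argument in the spirit of \cite{AAgarwal}, run on the full (unobservable) history, after which the true belief is swapped for the internal-state belief. The extra term in \eqref{eq:lemma1} is exactly the cost of that swap.

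\textbf{Step 1: exact performance difference on the information state.} By \cite{ANayyar}, the pair $(c,\text{local information})$, or equivalently the belief $b(\cdot\,|\,c)$ over $(x,y,m)$, is a sufficient statistic. So $\Phi$ is the value of a controlled Markov process whose state is the full information $h=(c,y,m)$. Applying the standard performance difference lemma to this process gives
\begin{equation*}
\Phi(\pi')-\Phi(\pi)=\frac{1}{1-\beta}\,\mathbb{E}_{(h,u)\sim d^{\pi'}\otimes\pi'}\big[A_\phi^{\pi}(h,u)\big],
\end{equation*}
where $A_\phi^\pi(h,u)=Q_\phi^\pi(h,u)-V_\phi^\pi(h)$. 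Both $\pi$ and $\pi'$ are finite-state controllers, so they act through $\hat h=(w,l,y)$. Because $(w,l)$ is a deterministic function of the history via $\varphi,\chi_i$, the state can be augmented to $(h,\hat h)$ without changing anything.

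\textbf{Step 2: express $Q_\phi^\pi$ through beliefs.} I would write
\begin{equation*}
Q_\phi^\pi(h,u)=\sum_{x,y,m} b(x,y,m\,|\,c)\,q^\pi(x,y,m,w,l,u),
\end{equation*}
where $q^\pi$ is the potential-$Q$ at the underlying state together with the internal states. The internal-state advantage $A_\phi^\pi(\hat h_i,u)$ is the same expression with $b(\cdot\,|\,c)$ replaced by $b(\cdot\,|\,w)$, marginalized over $(l_{-i},y_{-i})$. Since $|q^\pi|\le \phi_{\max}/(1-\beta)$, the standard bound $|\sum_z(p-p')(z)f(z)|\le 2\|f\|_\infty\|p-p'\|_{\text{TV}}$ gives
\begin{equation*}
|A^\pi_\phi(h,u)-A^\pi_\phi(\hat h_i,u)|\le \frac{2\phi_{\max}}{1-\beta}\|b(\cdot|c)-b(\cdot|w)\|_{\text{TV}}.
\end{equation*}
This uses the same difference bound once for the $Q$ term and once for the $V$ term. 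If the constant comes out as $4$ instead of $2$, I would bound only $Q$ with TV and handle $V$ through the exact identity $\mathbb{E}_{u\sim\pi}A=0$. If needed, the factor $1/(1-\beta)$ can be absorbed by normalizing $\phi$ into $[0,\phi_{\max}]$.

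\textbf{Step 3: combine and marginalize.} Substituting the bound from Step 2 into Step 1 and taking the expectation over $d^{\pi'}_{\xi_i}\otimes\pi'$ yields \eqref{eq:lemma1}, with the error weighted by the same visitation measure. Marginalizing over $(l_{-i},y_{-i})$ then gives the stated distribution $(d_{\xi_i}^{\pi'}\otimes\pi')(\hat h_i,u)$.

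\textbf{Main obstacle.} The delicate point is Step 2: making rigorous that the internal-state $Q$ coincides with averaging the full-information $Q$ under $b(\cdot\,|\,w)$. The issue is that $b(\cdot\,|\,w)$ is not a true filter, so I would take it as the definition of $A^\pi_\phi(\hat h_i,\cdot)$. I would also check carefully that conditioning on $w$ versus $c$ leaves the future dynamics of $(w,l)$ unchanged under a fixed finite-state controller, so that only the initial belief differs.
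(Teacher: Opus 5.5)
Your argument is the paper's two-swap argument with different bookkeeping. You write an exact performance-difference identity on a full-information chain and then replace $b(\cdot\mid c)$ by $b(\cdot\mid w)$ in both $Q$ and $V$ at time $k$. The paper instead telescopes directly with the internal-state value $V^\pi_\phi(\hat h_i^k)$ along the $\pi'$-trajectory and inserts the full-state value $V^\pi_{\phi_0}(s^{k+1},\hat h_i^{k+1})$ in the bootstrap. It pays one swap in the one-step term \eqref{eq:lemma1o} and one in the correction $V^\pi_\phi-V^\pi_{\phi_0}$ at time $k+1$ \eqref{eq:pfd2}. The per-step charges are the same, and your version has the merit of isolating the approximation as a pointwise advantage mismatch on top of a standard identity.

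One precision point on Step~2. If the chain's state is $h=(c,y,m)$, then $Q^\pi_\phi(h,u)$ averages only over $x$, so $\sum_{x,y,m}b(x,y,m\mid c)\,q^\pi$ is not $Q^\pi_\phi(h,u)$. To get the lemma's gap between beliefs over $(x,y,m)$, first average the Step~1 integrand over $(y,m)$ given $c$. The object you then swap is $\sum_s b(s\mid c)\sum_u(\pi'-\pi)(u\mid\hat h)\,q^\pi(s,\cdot,u)$, which is what the paper effectively computes in \eqref{eq:lemma1o}.

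Two of your fixes for the constants do not work as stated.

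\textbf{The factor $2$ versus $4$.} The identity $\mathbb{E}_{u\sim\pi}A^\pi=0$ does not turn $4$ into $2$. The performance difference averages $u$ under $\pi'$, so what remains is $\sum_u(\pi'-\pi)(u\mid\hat h)$ times a $Q$-difference, and this gives the same bound as two separate swaps. The constant $2$ comes from the range form $\big|\sum_z(p-p')(z)f(z)\big|\le(\sup f-\inf f)\,\|p-p'\|_{\mathrm{TV}}$ together with $0\le q^\pi\le\phi_{\max}/(1-\beta)$. That requires shifting $\phi$ to be nonnegative, which is harmless for differences of $\Phi$ and for advantages, with $\phi_{\max}$ then read as $\phi_{\max}-\phi_{\min}$. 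This is what the paper uses implicitly when it charges $\frac{\phi_{\max}}{1-\beta}d_b^k$ per swap.

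\textbf{The factor $1/(1-\beta)$.} The shift of $\phi$ cannot absorb this factor. Substituting Step~2 into Step~1 gives the error $\frac{2\phi_{\max}}{(1-\beta)^2}\,\mathbb{E}_{d^{\pi'}_{\xi_i}\otimes\pi'}\big[\|b(\cdot\mid c)-b(\cdot\mid w)\|_{\mathrm{TV}}\big]$. One $1/(1-\beta)$ comes from $|q^\pi|\le\phi_{\max}/(1-\beta)$ and one from normalizing the occupancy measure, and neither changes under a shift of $\phi$.

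This is also exactly what the paper's own proof delivers. Its \eqref{eq:lemma1-} carries $\frac{2\phi_{\max}}{1-\beta}\,\mathbb{E}^{\pi'}\big[\sum_k\beta^k d_b^k\big]$, an unnormalized discounted sum equal to $\frac{1}{1-\beta}$ times the occupancy average. The passage to \eqref{eq:lemma1} keeps the prefactor while treating the expectation as normalized, which loses that factor. So both arguments prove \eqref{eq:lemma1} only if its second expectation is read as the unnormalized discounted sum, or equivalently with prefactor $\frac{2\phi_{\max}}{(1-\beta)^2}$. You should state the result that way rather than claim the factor can be absorbed.
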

\begin{remark}
Lemma $1$, a variant of the well-known performance difference lemma in MDP,   consists of two terms: one associated with the advantage function and the other capturing the error induced by partial observability and the use of finite-state controllers. In fully observable settings, the latter term vanishes, and the result reduces to the standard performance difference bound for fully observable MPG. In contrast, for internal state–based POMPG, this additional term influences resulting finite-time convergence bounds.
\end{remark}
\begin{lem}
    For any $\pi$, we have
\begin{equation}
\text{NE-GAP}(\pi) \leq \cfrac{1}{1-\beta}\,\max_{i}\,\max_{\hat{h}_i,u_i}A_{i}^\pi(\hat{h}_i,u_i) + \cfrac{2d_b\phi_{\max}}{1-\beta}\cdot
\label{eq:lemma2}
\end{equation}
\end{lem}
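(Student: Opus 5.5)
The idea is to bound each agent's gap separately and then take the maximum. Fix $\pi$ and an agent $i$, and let $\pi'_i$ be an arbitrary unilateral deviation (within $\Pi_i$, or more generally against $\pi_{-i}$). By the potential property \eqref{eq:POMPG},
\[
J_i(\pi'_i,\pi_{-i})-J_i(\pi_i,\pi_{-i})=\Phi(\pi'_i,\pi_{-i})-\Phi(\pi_i,\pi_{-i}),
\]
so it is enough to bound a potential difference between two policies that differ only in agent $i$'s component. We apply Lemma~1 with $\pi'=(\pi'_i,\pi_{-i})$ and the base policy $\pi$.

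\textbf{First term.} The advantage term in \eqref{eq:lemma1} is $\frac{1}{1-\beta}\mathbb{E}_{(\hat h_i,u)\sim d^{\pi'}_{\xi_i}\otimes\pi'}[A^\pi_\phi(\hat h_i,u)]$. Since the other agents keep playing $\pi_{-i}$ under $\pi'$, we can average over $u_{-i}$ and pass to agent $i$'s marginal advantage $A^\pi_\phi(\hat h_i,u_i)$. The potential structure identifies the potential advantage for unilateral changes with agent $i$'s own advantage, so this marginal equals $A^\pi_i(\hat h_i,u_i)$. This identification is exact at the level of value differences; at the advantage level we will take it as the convention the paper uses. An expectation is at most a supremum, which gives
\[
\mathbb{E}[A_i^\pi(\hat h_i,u_i)]\le\max_{\hat h_i,u_i}A_i^\pi(\hat h_i,u_i).
\]

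\textbf{Second term.} The error term in \eqref{eq:lemma1} is $\frac{2\phi_{\max}}{1-\beta}\mathbb{E}\|b(\cdot|c)-b(\cdot|w)\|_{\text{TV}}$. The integrand is bounded pointwise by $d_b=\max_w\|b(\cdot|c)-b(\cdot|w)\|_{\text{TV}}$, and the sampling measure is a probability distribution. Hence the whole term is at most $\frac{2d_b\phi_{\max}}{1-\beta}$.

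\textbf{Conclusion.} Combining the two bounds gives
\[
J_i(\pi'_i,\pi_{-i})-J_i(\pi)\le \tfrac{1}{1-\beta}\max_{\hat h_i,u_i}A_i^\pi(\hat h_i,u_i)+\tfrac{2d_b\phi_{\max}}{1-\beta}.
\]
The right-hand side does not depend on $\pi'_i$, so taking the supremum over $\pi'_i$ yields the same bound on $\text{NE-gap}_i(\pi)$. Taking the maximum over $i$ then gives \eqref{eq:lemma2}, after bounding $\max_{\hat h_i,u_i}A_i^\pi$ by $\max_i\max_{\hat h_i,u_i}A_i^\pi$.

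\textbf{Main obstacle.} The delicate step is replacing $A^\pi_\phi$ with $A^\pi_i$ in the first term. The potential identity \eqref{eq:POMPG} concerns total values, not pointwise advantages on the compressed information $\hat h_i$. The fallback is to re-run the performance-difference argument behind Lemma~1 directly for $J_i$, with $r_i$ in place of $\phi$. In the unilateral-deviation setting this produces agent $i$'s own advantage, and the same belief-mismatch term bounded by $d_b$ (with $\phi_{\max}$ as the reward bound). That route avoids any pointwise potential identity and needs only \eqref{eq:POMPG} to justify working with $\Phi$ or $J_i$ interchangeably.
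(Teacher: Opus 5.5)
Your proposal follows the paper's proof: its one-line argument also combines the NE-gap definition, the potential identity \eqref{eq:POMPG} for a unilateral deviation, and the performance-difference bound of Lemma~1, bounding the advantage term by its maximum and the belief-mismatch term by $d_b$, and it passes from $A_\phi^\pi$ to $A_i^\pi$ implicitly, which is exactly the step you flag as delicate. One small remark: the constant $\frac{2d_b\phi_{\max}}{1-\beta}$ comes from Lemma~1 in the averaged form you use, whereas the per-initial-condition bound \eqref{eq:lemma1-} that the paper cites gives $\frac{2d_b\phi_{\max}}{(1-\beta)^2}$ after $\sum_k\beta^k d_b^k\le d_b/(1-\beta)$; this is a normalization mismatch inside the paper, not a defect of your argument.
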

\begin{lem}
The sequence $\{\pi^{t}\}$ generated by \eqref{eq:NPGpi} satisfies
\begin{equation}
\begin{aligned}
&\cfrac{1}{1-\beta}\,\mathbb{E}_{\hat{h}_i,u}\big[ A_{\phi}^{\pi^t}(\hat{h}_i,u)\big] \\
&\geq \kappa\sum_{\rho_i}d_{\xi_i}^{\pi^{t+1}}(\hat{h}_i)\,\sum_{j=1}^{n} \text{KL}\big(\pi_i^{t+1}(\cdot\,|\,\hat{h}_j)\,\|\,\pi_i^{t}(\cdot\,|\,\hat{h}_j)\big) \\
&\quad + \cfrac{1}{\eta}\,\sum_{\rho_i}d_{\xi_i}^{\pi^{t+1}}(\hat{h}_i)\,\sum_{i=1}^{n}\,\log \big( g_i^t(\hat{h}_i)\big),
\label{eq:lemma3}
\end{aligned}
\end{equation}
where $\kappa=\cfrac{1}{\eta}-\cfrac{2n\phi_{\max}}{(1-\beta)^2}\cdot$
\end{lem}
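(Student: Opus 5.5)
The statement to prove is Lemma 3, the one-step improvement inequality for the softmax NPG update \eqref{eq:NPGpi}. I would follow the fully observable MPG argument of \cite{RZhang}, working agent by agent on the internal information $\hat{h}_i$. The starting point is the closed form of the update. Taking logarithms in \eqref{eq:NPGpi} gives, for every agent $i$ and every $(\hat h_i,u_i)$,
\begin{equation*}
\frac{\eta}{1-\beta}A_i^{\pi^t}(\hat h_i,u_i)=\log\frac{\pi_i^{t+1}(u_i|\hat h_i)}{\pi_i^t(u_i|\hat h_i)}+\log g_i^t(\hat h_i).
\end{equation*}
Averaging this identity over $u_i\sim\pi_i^{t+1}(\cdot|\hat h_i)$ gives
\begin{equation*}
\frac{1}{1-\beta}\mathbb{E}_{u_i\sim\pi_i^{t+1}}[A_i^{\pi^t}(\hat h_i,u_i)]=\frac{1}{\eta}\Big(\text{KL}\big(\pi_i^{t+1}\|\pi_i^t\big)(\hat h_i)+\log g_i^t(\hat h_i)\Big).
\end{equation*}
This is the source of the KL term and the $\log g_i^t$ term, both with coefficient $1/\eta$.

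The next step links the joint potential advantage $A_\phi^{\pi^t}(\hat h_i,u)$, averaged under $u\sim\pi^{t+1}$, to the sum of the individual advantages. Write $\pi^{t+1}(u|\hat h)=\prod_j\pi_j^{t+1}(u_j|\hat h_j)$ and telescope the joint change $\pi^t\to\pi^{t+1}$ one agent at a time: $\pi^{(0)}=\pi^t$, and $\pi^{(j)}$ switches agents $1,\dots,j$ to $\pi^{t+1}$. Each single-agent switch $\pi^{(j-1)}\to\pi^{(j)}$ is a unilateral deviation. By the potential property \eqref{eq:POMPG}, agent $j$'s advantage with respect to $\phi$ then matches $A_j^{\pi^t}$ in expectation. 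The cross terms measure the difference between evaluating $A_j$ under $\pi_{-j}^t$ and under the partially updated $\pi^{(j-1)}_{-j}$. I would bound each cross term by $\|A\|_\infty\le 2\phi_{\max}/(1-\beta)$ times the total variation between these product distributions. That total variation is at most $\sum_{k}\|\pi_k^{t+1}-\pi_k^t\|_{\text{TV}}$, and Pinsker converts the resulting products of TV terms into $\sum_k\text{KL}(\pi_k^{t+1}\|\pi_k^t)$. Summing over $j$ contributes a factor $n$, which is where the penalty $2n\phi_{\max}/(1-\beta)^2$ comes from. Subtracting it from the $1/\eta$ coefficient of the first step gives $\kappa$. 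The change of visitation measure from $\pi^t$ to $d_{\xi_i}^{\pi^{t+1}}$ is legitimate because the weights are nonnegative and the same pointwise identity holds at every $\hat h_i$.

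The main obstacle is the cross-term bound, where the advantage functions are defined on the compressed information $\hat h_i$ and not on the true state. I would define $A_j$ and $A_\phi$ as conditional expectations given $\hat h_j$ under the internal-state belief $b(\cdot|w)$. That keeps every quantity a bounded function of $(\hat h,u)$, so the TV and Pinsker step goes through with the same constant $2\phi_{\max}/(1-\beta)$. The remaining observability error is not needed in Lemma 3 because it has already been isolated in Lemma 1. Finally, I would check that $\log g_i^t\ge 0$ by Jensen, since $\mathbb{E}_{\pi_i^t}A_i^{\pi^t}=0$. The $\log g$ term is therefore nonnegative, which is what the later telescoping in the proof of Theorem~\ref{main_theorem} requires.
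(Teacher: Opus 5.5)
Your proposal is essentially the paper's proof: the closed form of \eqref{eq:NPGpi} gives the $\frac{1}{\eta}\big(\text{KL}+\log g_i^t\big)$ term as in \eqref{eq:lemma3-3}, and the agent-by-agent telescoping of $\pi^{t+1}-\pi^t$ through the mixed policy $\tilde{\pi}_{-i}$ yields the $2n\phi_{\max}/(1-\beta)^2$ penalty in $\kappa$, once the cross terms $\sum_{u_i}(\pi_i^{t+1}-\pi_i^t)(\tilde{A}_{i,\phi}^{\pi^t}-A_{i,\phi}^{\pi^t})$ are bounded by $\ell_1$ distances of the other agents' policies followed by Pinsker. Like the paper, you assume without proof that the $\pi_{-j}^t$-averaged potential advantage equals the $A_j^{\pi^t}$ driving the update at every $\hat{h}_j$; this needs more than citing \eqref{eq:POMPG} (for example, equality of the $\theta_j$-gradients of $J_j$ and $\Phi$) and is not automatic for advantages defined through $b(\cdot\,|\,w)$, but the paper makes the same identification implicitly in \eqref{eq:lemma3-3}.
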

\begin{lem}
  The sequence $\{\pi^{t}\}$ generated by \eqref{eq:NPGpi} satisfies
\begin{equation}
\begin{aligned}
&\sum_{\hat{h}_i}d_{\xi_i}^{\pi^{t+1}}(\hat{h}_i)\,\sum_{i=1}^{n}\log\big(g_i^t(\hat{h}_i)\big) \\
&\geq \cfrac{a \eta^2}{3M}\,\big(\text{NE-gap}(\pi^t)-\cfrac{2d_b\phi_{max}}{1-\beta}\big)^2,
\label{eq:lemma4}
\end{aligned}
\end{equation}
where $M=\sup_{\pi} \max_{\hat{h}_i}\frac{1}{d_{\xi_i}^{\pi}(\hat{h}_i)}$.
\end{lem}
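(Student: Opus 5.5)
The statement to prove is Lemma 4, a lower bound on the weighted sum of $\log g_i^t$. I would follow the MPG argument of \cite{RZhang}, adapted to internal states.

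\textbf{Step 1: pointwise lower bound on $\log g_i^t$.}
Fix $\hat{h}_i$ and write $\alpha=\eta/(1-\beta)$. Note that $\sum_{u_i}\pi_i^t(u_i|\hat{h}_i)A_i^{\pi^t}(\hat{h}_i,u_i)=0$ and that $|A_i^{\pi^t}|\le 2\phi_{\max}/(1-\beta)$. With the given step size, $\alpha|A_i|\le 1$. On $[-1,1]$ we can use $e^{x}\ge 1+x+x^2/3$, which holds there (the constant is chosen for this range). The zero-mean property then gives
\begin{equation*}
g_i^t(\hat{h}_i)\ \ge\ 1+\frac{\alpha^2}{3}\sum_{u_i}\pi_i^t(u_i|\hat{h}_i)A_i^{\pi^t}(\hat{h}_i,u_i)^2 .
\end{equation*}
Keep only the maximizing actions $u_i^*=\arg\max_{u_i}Q_i^{\pi^t}(\hat{h}_i,u_i)$, whose total mass is at least $a$. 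This yields
\begin{equation*}
g_i^t(\hat{h}_i)\ \ge\ 1+\frac{a\alpha^2}{3}\big(\max_{u_i}A_i^{\pi^t}(\hat{h}_i,u_i)\big)^2 .
\end{equation*}
Since the right-hand side is bounded, $\log(1+z)\ge z/\kappa_0$ for a constant $\kappa_0$, which gives a pointwise bound of the form $c\,a\eta^2\max_{u_i}A_i^2$. Each term $\log g_j^t\ge 0$ by Jensen, so the sum over agents is at least its largest term.

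\textbf{Step 2: pass to the weighted sum over $\hat{h}_i$.}
Use $d_{\xi_i}^{\pi^{t+1}}(\hat{h}_i)\ge 1/M$, by the definition of $M$. Then
\begin{equation*}
\sum_{\hat{h}_i}d^{\pi^{t+1}}_{\xi_i}(\hat{h}_i)\sum_j\log g_j^t\ \ge\ \frac{a\eta^2}{3M(1-\beta)^2}\max_j\max_{\hat{h}_j,u_j}\big(A_j^{\pi^t}\big)^2 .
\end{equation*}
The factor $(1-\beta)^{-2}$ combines with Lemma 2. Since $\max A\ge 0$, Lemma 2 gives
\begin{equation*}
\frac{1}{1-\beta}\max_j\max A_j^{\pi^t}\ \ge\ \Big(\text{NE-gap}(\pi^t)-\frac{2d_b\phi_{\max}}{1-\beta}\Big)_+ .
\end{equation*}
Squaring produces exactly the stated right-hand side. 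If the bracket is negative, the claim follows from the nonnegativity of the left-hand side, at least with the positive-part reading.

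\textbf{Main obstacle.}
The delicate step is Step 1's constant bookkeeping. We must verify that $\eta=(1-\beta)^2/(2n\phi_{\max})$ keeps $\alpha|A|$ in the range where the quadratic lower bound on the exponential, and the linearization of $\log$, hold with constant $1/3$. If the direct inequality fails, I would instead invoke the standard bound $\log\mathbb{E}e^{X}\ge \mathbb{E}X+\tfrac{1}{3}\mathrm{Var}$-type estimate for bounded $X$ from \cite{RZhang}. I would also restrict to the $u_i^*$ mass to obtain $a$.

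A second subtlety is that the sum over agents $j$ is weighted by agent $i$'s marginal $d_{\xi_i}$ while $g_j$ depends on $\hat{h}_j$. I would handle this by noting that every $\hat{h}_j$ shares $w$, and bounding via $M$ uniformly for the maximizing agent.
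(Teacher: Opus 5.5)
Your overall route is the paper's. You lower-bound $g_i^t$ by a second-order expansion using $\sum_{u_i}\pi_i^t(u_i|\hat{h}_i)A_i^{\pi^t}(\hat{h}_i,u_i)=0$ and the mass $a$ on greedy actions. You then linearize the logarithm, use $d_{\xi_i}^{\pi^{t+1}}\ge 1/M$, and invoke Lemma 2 to pass from $\max_{u_i}A_i^{\pi^t}$ to the NE-gap. However, Step 1 as written does not deliver the constant you display in Step 2. Applying $e^x\ge 1+x+x^2/3$ to every action gives $g_i^t\ge 1+z$ with $z=\frac{a\alpha^2}{3}(\max_{u_i}A_i^{\pi^t})^2$. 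Since $\log(1+z)<z$ for $z>0$, any $\kappa_0$ in $\log(1+z)\ge z/\kappa_0$ must exceed $1$; with $z\le 1/3$ you can take $\kappa_0=4/3$. You therefore end up with $\frac{a\eta^2}{4M}$ rather than $\frac{a\eta^2}{3M}$.

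The paper obtains the factor $1/3$ by using the quadratic term only where it is free. At the greedy actions the exponent $x=\alpha\max_{u_i}A_i^{\pi^t}$ is nonnegative, so $e^x\ge 1+x+x^2/2$ holds with no range restriction. At the remaining actions it uses only $e^x\ge 1+x$. The zero-mean property cancels the linear terms and gives $g_i^t\ge 1+\frac{a}{2}(\alpha\max_{u_i}A_i^{\pi^t})^2$. Your range check $\alpha|A_i^{\pi^t}|\le 1$, together with $a\le 1$, makes this increment at most $1/2$. Then $\log(1+\lambda)\ge\frac{2}{3}\lambda$ on $[0,\frac12]$ gives exactly $\frac{a}{3}(\alpha\max_{u_i}A_i^{\pi^t})^2$, and the rest of your argument goes through unchanged.

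Your caveat about the sign of $\text{NE-gap}(\pi^t)-\frac{2d_b\phi_{\max}}{1-\beta}$ is correct, and the paper passes over it. Squaring the Lemma 2 inequality is only legitimate with the positive part, so the lemma should be read with $(\cdot)_+^2$. This costs nothing in Theorem 1, because $\lambda_1^2\le 2\lambda_2^2+2(\lambda_1-\lambda_2)_+^2$ still holds for $\lambda_1,\lambda_2\ge 0$.
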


\noindent \textbf{Proof of Theorem \ref{main_theorem}}: Using \eqref{eq:lemma1}--\eqref{eq:lemma4} with $\eta = \frac{(1-\beta)^2}{2n\phi_{\max}}$ yields
\[
\begin{aligned}
&\Phi(\pi^{t+1}) - \Phi(\pi^t) + \cfrac{2d_b\phi_{\max}}{1-\beta} \\
&\geq \cfrac{a\eta}{3M}\,\big(\text{NE-gap}(\pi^t)-\cfrac{2d_b\phi_{max}}{1-\beta}\big)^2,
\end{aligned}
\]
which when summing up over $t$ gives
\[
\begin{aligned}
&\cfrac{1}{T}\sum_{t=0}^{T-1}\big(\text{NE-gap}(\pi^t)-\cfrac{2d_b\phi_{max}}{1-\beta}\big)^2 \\ &\leq \cfrac{3M}{a\eta T}\big(\big(\Phi(\pi^T)-\Phi(\pi^0)\big)+\cfrac{2d_bT\phi_{\max}}{1-\beta}\big) \\
&\leq \cfrac{3M\phi_{\max}}{a(1-\beta)\eta T} + \cfrac{6d_bM\phi_{\max}}{a(1-\beta)\eta}\cdot
\end{aligned}
\]
By using the relation $\lambda_1^2\leq2\lambda_2^2+2(\lambda_1-\lambda_2)^2$, where $\lambda_1=\text{NE-gap}(\pi^t)$ and $\lambda_2=\cfrac{2d_b\phi_{max}}{1-\beta}$, we obtain
\[
\begin{aligned}
\text{NE-gap}(\pi^t)^2 \leq \cfrac{8d_b^{2}\phi_{\max}^2}{(1-\beta)^2}+2\big(\text{NE-gap}(\pi^t)-\cfrac{2d_b\phi_{max}}{1-\beta}\big)^2,
\end{aligned}
\]
which when summing over $t$ and using the relation above gives
\[
\begin{aligned}
&\cfrac{1}{T}\sum_{t=0}^{T-1}\,\text{NE-gap}(\pi^t)^2  \\ &\leq \cfrac{8d_b^{2}\phi_{\max}^2}{(1-\beta)^2}+2\big(\cfrac{3M\phi_{\max}}{a(1-\beta)\eta T} + \cfrac{6d_bM\phi_{\max}}{a(1-\beta)\eta}\big) \\
&=\big(\cfrac{12Mn\phi_{\max}^2}{a(1-\beta)^3}\big)\cfrac{1}{T} + \cfrac{8\phi_{\max}^2}{(1-\beta)^2}\big(d_b^{2}+\cfrac{Mn}{a(1-\beta)}d_b\big).
\end{aligned}
\]
\indent
Applying the Jensen's inequality to the preceding relation immediately gives Eq. \eqref{main_theorem:ineq}. This concludes our proof.
\section{Simulation}

\begin{figure}[t]
    \centering
    \includegraphics[width=\columnwidth]{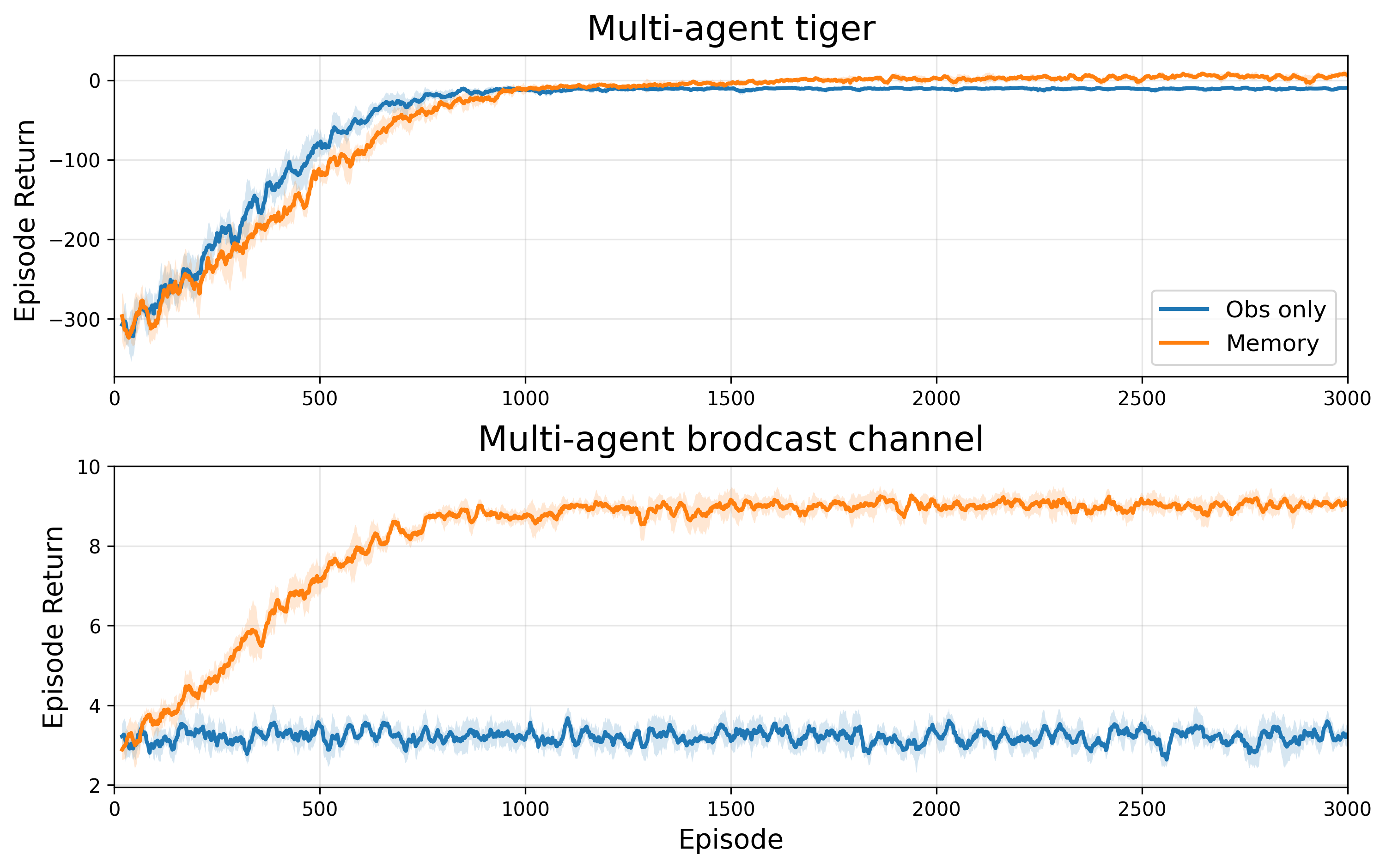}
    \caption{Learning curve of ITRPO for MATiger and MABC}
    \label{fig:training_curve}
\end{figure}
\begin{figure}[t]
    \centering
    \includegraphics[width=\columnwidth]{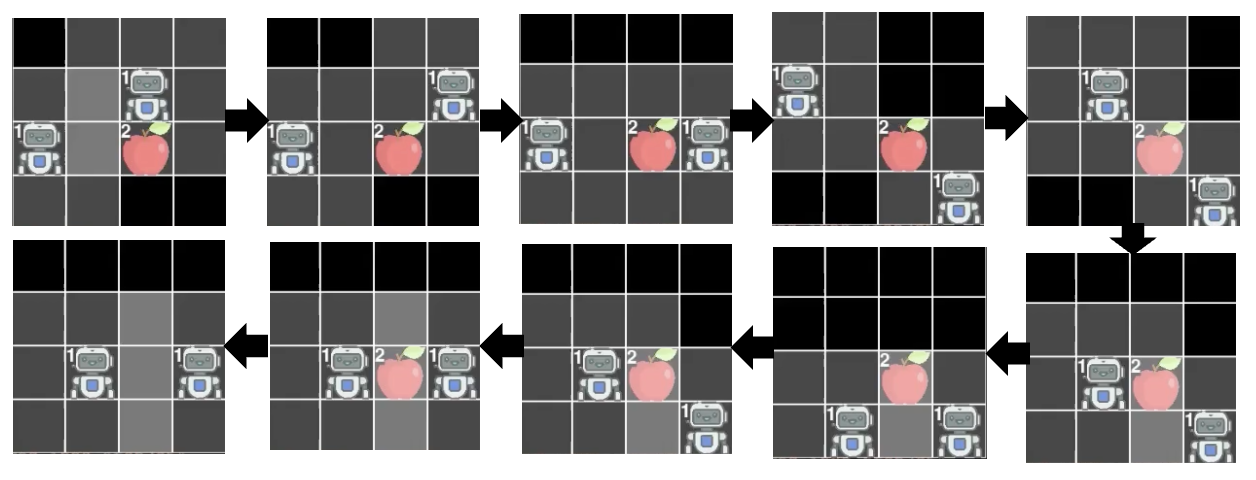}
    \caption{One run of the learned policy on LBF environment}
    \label{fig:training_curve}
\end{figure}
We evaluate the proposed NPG algorithm on partially observable common-reward games across three environments from the POSGGym benchmark \cite{JSch}: Multi-Agent Tiger (MATiger), Multi-Access Broadcast Channel (MABC), and Level-Based Foraging (LBF). 

In MATiger, two agents face two doors concealing a tiger and a treasure, respectively. At each step, each agent independently chooses to open the left door, open the right door, or listen. Opening the treasure door yields a reward of +10, opening the tiger door incurs a penalty of -100, and listening yields -1. Agents receive noisy observations of the tiger's location and the doors opened, with the objective of maximizing cumulative team reward over a 10-step episode.

In MABC, two nodes share a communication channel that supports only one successful transmission at a time. Each node independently chooses to transmit or remain idle, receiving a reward of $1$
if exactly one node transmits and $0$
 otherwise, along with noisy collision observations. Agents share the objective of maximizing channel throughput over a $10$-step episode.

In LBF, two agents navigate a $4\times 4$ grid world containing one food item, with each agent's observation limited to a sight range of 1. Agents may move within the grid or execute a lift action to collect adjacent food, upon which a reward is received and the episode terminates. The maximum episode length is set to 50.

In our experiments, neural network (NN)-based policy and value function are used for implementability. We use independent Trust Region Policy Optimization (ITRPO) \cite{JSchulman} with parameter sharing, which is a variant of NPG. The finite window of memory is selected as the internal state representation, and the finite-memory policy is used for finite-state controller. In addition, we set the local internal state at time $t$ as $l_i^t=(y_i^{t-t_w:t-1},u_i^{t-t_w:t-1})$ and the shared internal state as $w^t=y_{-i}^{t-t_w:t-1}$, where $t_w$ is the length of memory. We let $t_w=2$ for all experiments. 

\noindent\textbf{Main Observations.} The simulation results yield several key insights. As shown in Figure 1, the finite-memory policy consistently achieves higher returns than the reactive policy across both MATiger and MABC environments. Notably, the reactive policy fails to learn in MABC, underscoring that memory is essential for effective decision-making under partial observability. Figure 2 further illustrates that the finite-memory policy learned in the LBF environment produces coordinated agent behavior—agents align their movements and successfully complete the task—demonstrating that internal states naturally facilitate structured cooperation. Beyond performance improvements, these results reveal that internal state representations serve as a structural mechanism that implicitly approximates belief states without requiring explicit belief updates, thereby enabling tractable learning. The learning curves exhibit consistent improvement before saturating at large iteration counts, in agreement with the convergence properties established in Theorem 1. Finally, the performance gap between the finite-memory and reactive policies reflects the approximation error introduced by finite-state controllers, suggesting that richer internal state representations could further reduce this gap.
\section{Appendix}
\subsection{Proof of Lemma 1}
Let $V_{\phi}^{\pi}(\hat{h}_i^0)= \mathbb{E}^{\pi'}\big[\sum_{k=0}^{\infty}\beta^k\phi(x^k,u^k)|\hat{h}_i^0\big]$. The performance difference can be expressed as
\begin{equation}
\begin{aligned}
& V_{\phi}^{\pi'}(\hat{h}_i^0)-V_{\phi}^{\pi}(\hat{h}_i^0) \\
&= \mathbb{E}^{\pi'}\big[\sum_{k=0}^{\infty}\beta^k\big(\phi(x^k,u^k)+\beta V_{\phi_0}^{\pi}(s^{k+1},\hat{h}_i^{k+1}) \\
&\qquad\qquad\qquad\qquad -V_{\phi}^{\pi}(\hat{h}_i^k)\big)|\hat{h}_i^0\big] \\ 
&\quad+ \beta\mathbb{E}^{\pi'}\big[\sum_{k=0}^{\infty}\beta^k\big(V_{\phi}^{\pi}(\hat{h}_i^{k+1})-V_{\phi_0}^{\pi}(s^{k+1},\hat{h}_i^{k+1})\big)|\hat{h}_i^0\big],
\label{eq:pfd}
\end{aligned}
\end{equation}
where $V_{\phi_0}^\pi(s^0,\hat{h}_i^0) = \mathbb{E}^{\pi'}\big[\sum_{k=0}^{\infty}\beta^k\phi(x^k,u^k)|s^0,\hat{h}_i^0\big],$ and $s^k=(x^k,y^k,m^k).$ Applying the law of total expectation to the first term in \eqref{eq:pfd}, we obtain
\begin{equation}
\begin{aligned}
&\mathbb{E}^{\pi'}\big[\phi(x^k,u^k)+\beta V_{\phi_0}^{\pi}(s^{k+1},\hat{h}_i^{k+1})-V_{\phi}^{\pi}(\hat{h}_i^k)|\hat{h}_i^0\big] \\
&= \mathbb{E}^{\pi'}\big[\mathbb{E}^{\pi'}\big[\phi(x^k,u^k)+\beta V_{\phi_0}^{\pi}(s^{k+1},\hat{h}_i^{k+1})|c^k,h^k\big] \\
&\qquad\qquad-V_{\phi}^{\pi}(\hat{h}_i^k)|\hat{h}_i^0\big].
\label{eq:pfd1}
\end{aligned}
\end{equation}
\indent
Rewrite the conditional expectation in \eqref{eq:pfd1} with $d_b^k=\|b(\cdot\,|\,c^k)-b(\cdot\,|\,w^k)\|_{\text{TV}}$ and  $Q_{\phi_0}^\pi(\cdot)\leq \phi_{\max}/(1-\beta)$ where $Q_{\phi_0}^\pi(s^0,\hat{h}_i^0,u^0) = \mathbb{E}^{\pi'}\big[\sum_{k=0}^{\infty}\beta^k\phi(x^k,u^k)|s^0,\hat{h}_i^0,u^0\big]$, then we obtain
\begin{equation}
\begin{aligned}
&\mathbb{E}^{\pi'}\big[\phi(x^k,u^k)+\beta V_{\phi_0}^{\pi}(s^{k+1},\hat{h}_i^{k+1})|c^k,\hat{h}^k\big] \\
&= \sum_{s^k,u^k}b(s^k|c^k)\,\pi'(u^k|\hat{h}^k)\,Q_{\phi_0}^{\pi}(s^k,\hat{h}_i^k,u^k) \\
&= \sum_{s^k,u^k}\big(b(s^k|c^k)-b(s^k|w^k)\big)\pi'(u^k|\hat{h}^k)Q_{\phi_0}^{\pi}(s^k,\hat{h}_i^k,u^k) \\
&\qquad +\sum_{u^k}\pi'(u^k|\hat{h}^k)\,Q_{\phi}^{\pi}(\hat{h}_i^k,u^k). \\
&\leq \cfrac{\phi_{\max}}{1-\beta}\,d_b^k + Q_{\phi}^{\pi}(\hat{h}_i^k,u^k).
\label{eq:lemma1o}
\end{aligned}
\end{equation}
\indent
Substituting \eqref{eq:lemma1o} into \eqref{eq:pfd1}, we obtain the first term in \eqref{eq:pfd} as
\begin{equation}
\begin{aligned}
&\sum_{k=0}^{\infty}\beta^k\,\mathbb{E}^{\pi'}\big[\mathbb{E}^{\pi'}\big[\phi(s^k,u^k)+\beta V_{\phi_0}^{\pi}(s^{k+1},\hat{h}_i^{k+1})\big|c^k,\hat{h}^k\big] \\
&\qquad\qquad\qquad\quad-V_{\phi}^{\pi}(\hat{h}_i^k)\big|\hat{h}_i^0\big] \\
&\leq \sum_{k=0}^{\infty}\beta^k\,\mathbb{E}^{\pi'}\bigg[\cfrac{\phi_{\max}}{1-\beta}d_b^k + Q_{\phi}^{\pi}(\hat{h}_i^k,u^k) -V_{\phi}^{\pi}(\hat{h}_i^k)\,\bigg|\,\hat{h}_i^0\bigg] \\
&=\mathbb{E}^{\pi'}\big[\sum_{k=0}^{\infty}\beta^kA_{\phi}^\pi(\hat{h}_i^k,u^k)|\hat{h}_i^0\bigg] + \cfrac{\phi_{\max}}{1-\beta}\mathbb{E}^{\pi'}\big[\sum_{k=0}^{\infty}\beta^k d_b^k|\hat{h}_i^0\big].
\label{eq:pfd11}
\end{aligned}
\end{equation}
\indent
A similar argument can be applied to the second term, yielding
\begin{equation}
\begin{aligned}
&\beta\mathbb{E}^{\pi'}\big[\sum_{k=0}^{\infty}\beta^k\big(V_{\phi}^{\pi}(\hat{h}_i^{k+1})-V_{\phi_0}^{\pi}(s^{k+1},\hat{h}_i^{k+1})\big)|\hat{h}_i^0\big] \\
&\leq \cfrac{\phi_{\max}}{1-\beta}\,\mathbb{E}^{\pi'}\big[\sum_{k=0}^{\infty}\beta^k d_b^k|
\hat{h}_i^0\big].
\label{eq:pfd2}
\end{aligned}
\end{equation}
\indent
Combining the above bounds, we obtain
\begin{equation}
\begin{aligned}
& V_{\phi}^{\pi'}(\hat{h}_i^0)-V_{\phi}^{\pi}(\hat{h}_i^0) \\
&\leq \mathbb{E}^{\pi'}\big[\sum_{k=0}^{\infty}\beta^kA_{\phi}^\pi(\hat{h}_i^k,u^k)|\hat{h}_i^0\big] + \cfrac{2\phi_{\max}}{1-\beta}\,\mathbb{E}^{\pi'}\big[\sum_{k=0}^{\infty}\beta^k d_b^k|\hat{h}_i^0\big].
\end{aligned}
\label{eq:lemma1-}
\end{equation}
\indent
Considering $\xi_i$, \eqref{eq:lemma1-} becomes $\eqref{eq:lemma1}$. This completes the proof.
\subsection{Proof of Lemma 2} Consider two policies $\pi'=(\pi'_i,\pi_{-i})$ and $\pi=(\pi_i,\pi_{-i})$. Using the definition of NE-gap, \eqref{eq:POMPG} and \eqref{eq:lemma1-}, we obtain \eqref{eq:lemma2}. This concludes the proof.
\subsection{Proof of Lemma 3}
Define mixed policy as
\[
\tilde{\pi}_{-i}(u_{-i}|\hat{h}_{-i})=\prod_{j=1}^{i-1}\pi_j^{t+1}(u_j|\hat{h}_j)\prod_{j=i+1}^{n}\pi_j^{t}(u_j|\hat{h}_j),\] 
and corresponding local advantage function as $\tilde{A}_{i,\phi}^t(\hat{h}_i,u_i)=\sum_{u_{-i}}\tilde{\pi}_{-i}(u_{-i}|\hat{h}_{-i})A_{\phi}^t(\hat{h}_i,u_i)$. The following term is obtained as 
\begin{equation}
\begin{aligned}
&\mathbb{E}_{\hat{h}_i,u}\big[ A_{\phi}^{\pi^t}(\hat{h}_i,u)\big] \\
&=\sum_{\hat{h}_i}d_{\xi}^{t+1}(\hat{h}_i)\,\sum_{i=1}^{n}\sum_{u_i}\big(\pi_i^{t+1}(u_i|
\hat{h}_i)-\pi_i^{t}(u_i|\hat{h}_i)\big)\tilde{A}_{i,\phi}^{\pi^t}(\hat{h}_i,u_i),
\label{eq:lemma3-1}
\end{aligned}
\end{equation}
due to $\pi^{t+1}-\pi^t=\sum_{i=1}^n (\pi_i^{t+1}-\pi_i^t)\,(\prod_{j<i}\pi_j^{t+1})\,(\prod_{j>i}\pi_j^{t})$, and the definition of $\tilde{A}_{i,\phi}^{\pi^t}(\hat{h}_i,u_i)$. \eqref{eq:lemma3-1} can be rewritten as
\begin{equation}
\begin{aligned}
&\sum_{u_i}\big(\pi_i^{t+1}(u_i|\hat{h}_i)-\pi_i^{t}(u_i|\hat{h}_i)\big)\tilde{A}_{i,\phi}^{\pi^t}(\hat{h}_i,u_i)  \\
&=\sum_{u_i}\pi_i^{t+1}(u_i|\hat{h}_i)A_{i,\phi}^{\pi^t}(\hat{h}_i,u_i) \\
&+\sum_{u_i}\big(\pi_i^{t+1}(u_i|\hat{h}_i)-\pi_i^{t}(u_i|\hat{h}_i)\big)\big(\tilde{A}_{i,\phi}^{\pi^t}(\hat{h}_i,u_i)-A_{i,\phi}^{\pi^t}(\hat{h}_i,u_i)\big).
\label{eq:lemma3-2}
\end{aligned}
\end{equation}
\indent
Using the definition of KL divergence and NPG update, the first term in \eqref{eq:lemma3-2} becomes
\begin{equation}
\begin{aligned}
&\sum_{u_i}\pi_i^{t+1}(u_i|\hat{h}_i)A_{i,\phi}^{\pi^t}(\hat{h}_i,u_i)  \\
&=\cfrac{1-
\beta}{\eta}\big(\text{KL}\big(\pi_i^{t+1}(\cdot|
\hat{h}_i)\,||\,\pi_i^{t}(\cdot|\hat{h}_i)\big)+
\log \big( G_i^t(w,l_i,y_i)\big)\big).
\label{eq:lemma3-3}
\end{aligned}
\end{equation}
\indent
The difference in advantage functions in \eqref{eq:lemma3-2} is bounded as
\begin{equation}
\begin{aligned}
& \big| \tilde{A}_{i,\phi}^{\pi^t}(\hat{h}_i,u_i)-A_{i,\phi}^{\pi^t}(\hat{h}_i,u_i) \big| \\
&\leq \cfrac{\phi_{\max}}{1-\beta}\,\sum_{j=1}^{n} \big\| \pi_j^{t+1}(\cdot|\hat{h}_j)-\pi_j^{t}(\cdot|\hat{h}_j)\big\|_1.
\label{eq:lemma3-4}
\end{aligned}
\end{equation}
\indent
Substituting \eqref{eq:lemma3-4} into \eqref{eq:lemma3-2} and using the Pinsker's inequality, we obtain
\begin{equation}
\begin{aligned}
&\sum_{u_i}\big(\pi_i^{t+1}(u_i|\hat{h}_i)-\pi_i^{t}(u_i|\hat{h}_i)\big)\big(\tilde{A}_{i,\phi}^{\pi^t}(\hat{h}_i,u_i)-A_{i,\phi}^{\pi^t}(\hat{h}_i,u_i)\big) \\
&\leq \cfrac{2n\phi_{\max}}{1-\beta}\sum_{j=1}^{n} \text{KL}\bigg(\pi_i^{t+1}(\cdot|\hat{h}_j)\,\|\,\pi_i^{t}(\cdot|\hat{h}_j)\bigg),
\label{eq:lemma3-6}
\end{aligned}
\end{equation}
and this completes the proof.

\subsection{Proof of Lemma 4}
Let $
u_i^*= \arg\max_{u_i} Q_i^{\pi}(\hat{h}_i,u_i) \in \mathcal{U}_i^*$, and $u_i^{-*} \in \mathcal{U}_i\,\backslash\,\mathcal{U}_i^*$. $G_i^t(\hat{h}_i)$ can be decomposed into the term with respect to $u_i^*$ and $u_i^{-*}$ as
\[
\begin{aligned}
G_i^t(\hat{h}_i) &=\sum_{u_i^*}\pi_i^t(u_i|\hat{h}_i)\exp\big(\cfrac{\eta\max_{u_i} A_i^{\pi^t}(\hat{h}_i,u_i)}{1-\beta}\big)\\
&\quad+\sum_{u_i^{-*}}\pi_i^t(u_i|\hat{h}_i)\exp\big(\cfrac{\eta A_i^{\pi^t}(\hat{h}_i,u_i)}{1-\beta}\big).
\end{aligned}
\]
\indent
Applying Taylor's expansion, we obtain the bound as
\[
\begin{aligned}
G_i^t(\hat{h}_i,u_i)
&\geq 1 + \cfrac{1}{2}\,\sum_{u_i^*}\pi_i^t(u_i|\hat{h}_i)\big(\cfrac{\eta\max_{u_i} A_i^{\pi^t}(\hat{h}_i,u_i)}{1-\beta}\big)^2 \\
&\geq 1+\cfrac{a}{2}\,\big(\cfrac{\eta\max_{u_i} A_i^{\pi^t}(\hat{h}_i,u_i)}{1-\beta}\big)^2.
\end{aligned}
\]
\indent
For $\eta \leq (1-\beta)^2$, we have $\cfrac{a}{2}\,\bigg(\cfrac{\eta\max_{u_i} A_i^{\pi^t}(\hat{h}_i,u_i)}{1-\beta}\bigg)^2 \leq \cfrac{1}{2}$, and use the inequality $\log (1+\lambda_3)\geq \dfrac{2}{3}\lambda_3$ for $0\leq \lambda_3 \leq \dfrac{1}{2}$, then the bound can be rewritten as

\begin{equation}
\log \big( G_i^t(\hat{h}_i,u_i) \big) \geq \cfrac{a}{3}\,\bigg(\cfrac{\eta\max_{u_i} A_i^{\pi^t}(\hat{h}_i,u_i)}{1-\beta}\bigg)^2.
\label{eq:lemma4o}
\end{equation}
\indent
Using \eqref{eq:lemma4o} and the definition of $M$, we obtain the bound of $\sum_{\hat{h}_i}d_{\xi}^{\pi^{t+1}}(\hat{h}_i)\sum_{i=1}^{n}\log\big(G_i^t(\hat{h}_i)\big)$ as \eqref{eq:lemma4}. This concludes the proof.


\ifCLASSOPTIONcaptionsoff
  \newpage
\fi

\end{document}